\crefname{equation}{Eq.}{Eqs.}
\crefname{section}{Sec.}{Sections}
\crefname{figure}{Fig.}{Figs.}
\def\be{\begin{equation}}
\def\ee{\end{equation}}
\def\bea{\begin{eqnarray}}
\def\eea{\end{eqnarray}}
\def\bma{\begin{mathletters}}
\def\ema{\end{mathletters}}
\def\q0{\underline{0}}
\def\S{{\cal S}}
\def\id{{\mathbb I}}
\def\R{\mathbb{R}}
\def\tr{\mbox{tr}}
\def\one{\leavevmode\hbox{\small1\normalsize\kern-.33em1}}
\def\bra#1{\langle#1|} \def\ket#1{|#1\rangle}
\def\proj#1{\ket{#1}\!\bra{#1}}
\newtheorem{theo}{Theorem}
\newtheorem{defin}[theo]{Definition}
\newtheorem{prop}[theo]{Proposition}
\newtheorem{problem}{Problem}
\def\id{{\mathbb I}}
\begin{document}
\title{Memory attacks in network nonlocality and self-testing}

\author{Mirjam Weilenmann}
\affiliation{Inria, Télécom Paris - LTCI, Institut Polytechnique de Paris, 91120 Palaiseau, France}
\email{mirjam.weilenmann@inria.fr}
\author{Costantino Budroni}
\affiliation{Department of Physics ``E. Fermi'' University of Pisa, Largo B. Pontecorvo 3, 56127 Pisa, Italy}
\email{costantino.budroni@unipi.it}
\author{Miguel Navascués}
\affiliation{Institute for Quantum Optics and Quantum Information--IQOQI Vienna, Austrian Academy of Sciences, Boltzmanngasse 3, 1090 Vienna, Austria}
\email{miguel.navascues@oeaw.ac.at}

\begin{abstract}
We study what can or cannot be certified in communication scenarios where the assumption of independence and identical distribution (iid) between experimental rounds fails. In this respect, we prove that membership tests for non-convex sets of correlations cannot be formulated in the non-iid regime. Similarly, it is impossible to self-test non-extreme quantum operations, such as mixed states, or noisy quantum measurements, unless one allows more than a single use thereof within the same experimental round. One consequence of our results is that non-classicality in causal networks without inputs cannot be experimentally demonstrated. By analyzing optimal non-iid strategies in the triangle scenario, we raise the need to take into account the prior communication required to set up a causal network.

\end{abstract}

\maketitle

\section{Introduction}

Hypothesis testing plays an important role in experiments for certifying properties of quantum systems.
Statistical tools based on martingale theory or prediction-based-ratio analysis  have been successfully applied~\cite{Gill1, Gill2, ZhangPRA2011, ZhangPRA2013}, especially in the context of Bell inequality violations \cite{Giustina_2015, ShalmPRL2015, storz}, where it is common to work in an adversarial setting. These techniques allow us to formulate binary hypothesis tests that are secure against adversaries that introduce variations between the different rounds of a protocol. As a consequence, these tests are valid beyond the {\it independent and identically distributed} (iid)~regime and take into account the possibility of {\it memory attacks}~\cite{memory_attack}.

In this work, we study the consequences of dropping the iid assumption in two standard certification tasks: (a) proving that the experimental devices can generate measurement statistics outside of a given set of correlations or (b) close to a given distribution. The first task is the basis of non-classicality detection, understood as the certification of quantum (or at least non-classical) effects within the experimental setup. The second task is instrumental for self-testing \cite{tsirel, Mayers_Yao}: certifying that the generated statistics are close enough to a given distribution, it is sometimes possible to make claims about the inner workings of the devices involved in the experiment.

Regarding the first task, we conclude that membership tests for non-convex sets of correlations cannot be formulated in the non-iid\ regime. This implies that, for general causal networks, there are correlations for which one cannot certify incompatibility with the network. In particular, no causal network without inputs, such as the famous triangle scenario~\cite{Steudel, Fritz}, allows certifying non-classicality of the observed correlations. Another consequence is that, in general, one cannot certify that a multipartite source is distributing separable states, due to the non-convexity of the set of (highly) entangled states.

Regarding the second task, we prove that, in the non-iid regime, robust self-testing is only possible if the considered distribution is an extreme point of the set of physical correlations. Thus it is impossible to self-test mixed quantum states and non-extreme quantum measurements or instruments, independently of the communication scenario, unless one allows multiple identical preparations or uses thereof in a single experimental round. This means that the constructions presented in \cite{Nikolai, sarkar2023universal} to self-test arbitrary quantum states and measurements cannot be extended to non-iid scenarios.

We conclude our paper by analysing non-iid strategies that separate players in a network might follow to win a membership test devised for the iid case. We observe that many such strategies, while formally compatible with the network configuration, would require violating the network's causal constraints to be set up. Hence we raise the need to discuss how prior communication might compromise the conclusions of a causal experiment.

\section{The input/output scenario}
\label{sec:input_output}
We consider a scenario where the experimenters have access to a device with an input and an output port. In every experimental round, the experimenters feed a symbol $x\in\{1,...,X\}$ into the input port and read a symbol $a\in\{1,...,A\}$ on the output. Inputs and outputs might be composite, e.g.\ in a bipartite Bell experiment, the input is of the form $(x,x')$ and the output,  $(a,a')$. In some cases, the input may also be trivial, i.e., take only one possible value. This input/output framework thus encompasses Bell tests, prepare-and-measure scenarios, and causal network experiments.  

In general, the probability of obtaining output $a$ given input $x$ in round $k$ depends on both $k$ and the past history of inputs and outputs. Invoking Bayes' rule, the most general statistical behavior of a device in an input/output experiment is given by the marginals 
\begin{align}
P_1(a_1|x_1), P_2(a_2|x_2, s_1),...,P_n(a_n|x_n, s_{n-1}),
\end{align}
where $s_k=(a_1,x_1,...,a_{k},x_{k})$ represents the list of inputs and outputs occurring prior to or in round $k$. We will also denote by $P^{(k)}$ the prior distribution  of the inputs and outputs in the first $k$ rounds. A special type of behavior arises when, in every round $k$, the probability of obtaining output $a_k$ is solely dependent on $k$ and the input $x_k$, i.e., when $P_k(a_k|x_k, s_{k-1})=P_k(a_k|x_k)$, for $k=1,...,n$. We will then denote $P^{(n)}$ by $\bigotimes_{k=1}^nP_k$. If $P_k=P,\forall k$, we say that the device's behavior is iid. 
In that case, we write $P^{(n)}=P^{\otimes n}$.

If we posit some physical model $S$ generating $P_k(a_k|x_k, s_{k-1})$, we might find that the corresponding set of correlations $\S=\{P(a|x):P(a|x) \mbox{ compatible with } S\}$ does not exhaust the set of all possible probability distributions. In particular, we might find a constraint of the form 
\begin{equation}\label{eq:prob_ineq}
F(P)\leq \alpha,\forall P\in \S,
\end{equation}
where $F$ is a function of the distribution $P(a|x)$ and $\alpha\in\R$, which is violated for some $P\notin \S$. Examples of such constraints are Bell inequalities (when $\S$ is the set of local realistic correlations \cite{Brunner2014, scarani2019bell}) or prepare-and-measure dimension witnesses (when the experimental outcomes are generated by conducting measurements labeled by $y$ on a $D$-dimensional quantum system in state $x$ \cite{Gallego_2010}).

Now, suppose that we wish to refute the (null) hypothesis that model $S$ describes the device's behavior. Within the input/output framework, this is equivalent to proving that the device can generate a distribution $P\not\in \S$. The most general way to proceed is by defining a binary hypothesis test.

\begin{defin}{\textbf{Binary Hypothesis Test.}}\label{def:hyp_test}
An $n$-round binary hypothesis test is a prescription  to decide what symbol $x_k$ to input into the device in each experimental round $k$, as well as how to decide on the final outcome of the test: $0$ (signifying that the test aiming to reject the null hypothesis failed) or $1$ (meaning that the null hypothesis is rejected).
\end{defin}

According to this definition, the most general statistical test $T$ for an input/output device is defined in terms of the marginal distributions $Q_1(x_1), Q_2(x_2|s_1), ..., Q_n(x_n|s_{n-1})$, which generate the next input based on the past history of inputs and outputs, and the distribution $Q_{f}(o|s_{n})$, with $o\in\{0,1\}$, used to generate the final, binary outcome of the test. Given the $n$-round behavior $P^{(n)}$ of the experimental device, the probability to declare the null hypothesis falsified is thus
\begin{align}\label{eq:PT}
P_{T}(o=1|P^{(n)}):=&\sum_{s_n}P^{(n)}(s_n) Q_1(x_1)Q_2(x_2|s_1)...\nonumber\\
&Q_n(x_n|s_{n-1})Q_{f}(1|s_n).
\end{align}

To see that Eq.~\eqref{eq:PT}  captures the most general hypothesis test, note that the above does not only apply to hypothesis tests where the total number of rounds $n$ is fixed \emph{a priori}, but also to hypothesis tests where the total number of rounds dynamically depends on the sequence of outputs of the experimental device. Such variable-$n$ protocols are indeed included in the formulation above, as long as they have a maximum number $N$ of rounds (this is a realistic assumption, since there is no such thing as an infinitely long experiment). In that case, the protocol would be defined through marginals $Q_1,...,Q_N$ and the condition of variable $n$ can be modeled by making the outcome distribution $Q_{f}(o|s_N)$ to be of the form $Q_{f}(o|s_N)=Q_{f}(o|s_n)$, for some lists of inputs/outputs $s_n$. Namely, the device would still be probed after the $n^{th}$ round, but the subsequent results would not be taken into account. Alternatively, one can define a new input $x=\emptyset$, signifying that the device is not to be measured in that round, and set the marginals $Q_1,...,Q_{n}$ defining the test so that, if the data $s_k$ collected so far contains any input labeled $\emptyset$, then the output of $Q_k$ must be $\emptyset$ with certainty.

Suppose, then, that we wish to prove that our device can produce distributions $P(a|x)$ such that $F(P)> \alpha$, with $F$ being a continuous function. Equivalently, we wish to refute the hypothesis that, in every round, the device generates a distribution $P_k$ satisfying $F(P_k)\leq \alpha$.

In the iid\ scenario, all we need to do is input different values of $x_1,...,x_n$ and note the outcomes $a_1,...,a_n$. Then, we can approximate the probabilities $P(a|x)$ by the observed frequencies, i.e.,
\begin{equation}\label{eq:prob_est_iid}
\tilde{P}(a|x):=\frac{|\{a_i=a, x_i=x\}|}{|\{x_i=x\}|},
\end{equation}
where $|\{a_i=a, x_i=x\}|$ is the number of times $x_i=x$ and $a_i=a$ is observed.
Finally, we evaluate $F(\tilde{P})$, corresponding to the average value of $F$ in our experiment, and its standard deviation $\sigma$. We then declare the null hypothesis rejected if the result is larger than $\alpha+ K \sigma$ for some predetermined $K>0$. In other words, to declare the null hypothesis rejected we require that the average is $K$ standard deviations away from the bound.

The instructions above are in effect an $n$-round binary hypothesis test, with the property that, for large $n$, the outcome will be $0$ with high probability if the device cannot generate distributions beyond $\S$. On the contrary, if the device does generate distributions outside of $\S$, then the test will output $1$ with high probability. 

In the iid\ scenario, it is similarly easy to carry out a \emph{self-testing} experiment. Let $\|\bullet\|$ be an arbitrary norm and suppose that $P\in \S$ is such that, for all $Q\in \S$, the relation $\|P-Q\|<\varepsilon$ implies that the physical realization of $Q$ within theory $S$ requires the device to prepare states and/or conduct operations close to some reference ones. Then we say that $P$ \emph{robustly self-tests} those states and/or operations.

What happens, though, if we drop the iid\ assumption? As shown in~\cite{ZhangPRA2013}, if $F$ is linear on the probabilities $P$, then one can design a family of hypothesis tests $(T^n)_n$ with the property that
\begin{align}
&\lim_{n\to\infty}P_{T^n}(1|P^{\otimes n})=1,\forall P \text{ s.t. } F(P(a|x))>\alpha \nonumber\\
&P_{T^n}(1|P^{(n)})\leq \varepsilon_n,\forall P^{(n)} \text{ s.t. } F(P_k(a|x,s_k))\leq\alpha,\forall k,s_k,
\label{holy_grail}
\end{align}
for some $\{\epsilon_n\}_n\subset\R^+$ with $\lim_{n\to\infty}\epsilon_n=0$ (in the statistics literature, $\epsilon_n$ is known as the \emph{$p$ value} of test $T^n$).

Building on this observation, for any convex set $\S$ of correlations the authors of~\cite{ZhangPRA2013} provide a hypothesis test to falsify the null hypothesis that $P_k(a_k|x_k,s_k)\in \S \ \forall k$.

There exist, however, interesting sets of correlations that are not convex. These sets appear, for instance, in the study of causal networks~\cite{Fritz}. To partially characterize these sets, non-linear functions such as polynomials or linear combinations of entropies have been proposed~\cite{Chaves_poly, Inflation, entropy-review, Chaves}. As we saw, verifying that a system violates a non-linear constraint in the iid\ case is a trivial endeavor. But, how about the non-iid\ case? Can one extend the results of~\cite{ZhangPRA2013} to non-linear functions? 

One can ask the same questions regarding self-testing. In \cite{Nikolai}, Miklin and Oszmaniec propose a scheme, relying on the iid assumption, to self-test arbitrary quantum states and Positive Operator Valued Measures (POVMs) in prepare-and-measure scenarios. In particular, they can self-test states and measurements giving rise to single-shot input/output distributions $P$ that are not extreme in the set of $2$-dimensional quantum realizations. Can this scheme be extended to the non-iid regime?

\section{Testing non-convex sets}
Let us consider a set of correlations $\S$. If the null hypothesis holds, namely, if $P_k(a|x,s_k)\in \S,\forall k, s_k$, then we wish our hypothesis test to output $1$ with very low probability. At the same time, we wish the test to be meaningful, i.e., for some $P\not\in \S$ we wish that the probability of detection of $P^{\otimes n}$ be close to $1$. This defines the following problem.

\begin{problem}[Membership Problem] \label{prob:stat_test_design}
    Let $\mathcal{S}$ be a set of correlations. For $P \notin \mathcal{S}$, we aim to develop a family of hypothesis tests $(T^n)_n$, s.t. 
    \begin{equation}
     \lim_{n \rightarrow \infty} P_{T^n}(1| P^{\otimes n}) = 1.   
     \label{non_local}
    \end{equation}
    In addition, for all $P^{(n)}$ with $P_k(a_k|x_k,s_k)\in \S,\forall k,s_k$,
    \begin{align}
    & P_{T^n}(1| P^{(n)}) \leq \varepsilon_n,\nonumber\\
    &\lim_{n\to\infty}\varepsilon_n=0.
    \label{eq:local_bound}
    \end{align}
\end{problem}

Now, if $P\not\in \mbox{conv}(\S)$, where $\mbox{conv}(\cdot)$ denotes the convex hull, then we can use the results of~\cite{ZhangPRA2013} to devise a test to disprove the (stronger) null hypothesis  
$P_k\in \mbox{conv}(\S),\forall k, s_k$.
The following proposition shows that it is \emph{not} possible to devise a hypothesis test that rejects the null hypothesis and yet detects $P\not\in\S$, $P\in\mbox{conv}(\S)$.

\begin{prop} \label{theo:main}
    There does not exist a solution to Problem~\ref{prob:stat_test_design} for $P \in \operatorname{conv}(\mathcal{S}) \setminus \mathcal{S}$. 
\end{prop}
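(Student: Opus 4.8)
The plan is to argue by contradiction, exploiting the fact that the acceptance probability $P_{T^n}(1|\cdot)$ is affine under convex mixtures of device behaviors realized by pre-sampling a global hidden variable before round $1$. Suppose a family $(T^n)_n$ solving Problem~\ref{prob:stat_test_design} existed for some $P\in\operatorname{conv}(\S)\setminus\S$. Since $P\in\operatorname{conv}(\S)$, I would first fix a finite decomposition $P=\sum_{i=1}^m\lambda_i Q_i$ with $Q_i\in\S$, $\lambda_i>0$ and $\sum_i\lambda_i=1$; this decomposition is necessarily nontrivial because $P\notin\S$.

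The key step is to rewrite the iid behavior as a mixture of behaviors that \emph{each} satisfy the null hypothesis. Expanding the product,
\begin{align}
P^{\otimes n}(a_1,\dots,a_n|x_1,\dots,x_n)&=\prod_{k=1}^n\Big(\sum_{i_k}\lambda_{i_k}Q_{i_k}(a_k|x_k)\Big)\nonumber\\
&=\sum_{\vec{\imath}}\Big(\prod_{k=1}^n\lambda_{i_k}\Big)\prod_{k=1}^nQ_{i_k}(a_k|x_k),
\end{align}
where $\vec{\imath}=(i_1,\dots,i_n)$. Each component $R^{(n)}_{\vec{\imath}}:=\bigotimes_{k=1}^nQ_{i_k}$ is a non-adaptive product behavior whose round-$k$ conditional equals $Q_{i_k}(a_k|x_k)$ for every history $s_{k-1}$; since $Q_{i_k}\in\S$, each $R^{(n)}_{\vec{\imath}}$ therefore satisfies the null hypothesis $P_k(a_k|x_k,s_k)\in\S$, $\forall k,s_k$. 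The pitfall to avoid is the more obvious mixture $\sum_i\lambda_iQ_i^{\otimes n}$, which correlates the mode across all rounds: its round-$k$ conditional is a Bayesian posterior mixture of the $Q_i$, i.e.\ a point of $\operatorname{conv}(\S)$ that generically lies outside the non-convex set $\S$, so it does \emph{not} satisfy the null hypothesis. Using \emph{independent} modes per round repairs this, and since no correlation is introduced across rounds, the mixture reproduces exactly $P^{\otimes n}$.

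Finally, I would invoke linearity. Drawing $\vec{\imath}$ once at the start with probability $\prod_k\lambda_{i_k}$ and then running $R^{(n)}_{\vec{\imath}}$ induces precisely the behavior $P^{\otimes n}$, because $i_k$ has no influence on $s_{k-1}$ and its posterior given any history remains the prior $\lambda$. Since the global hidden variable is independent of the test's adaptive inputs, the acceptance probability averages over the components,
\begin{equation}
P_{T^n}(1|P^{\otimes n})=\sum_{\vec{\imath}}\Big(\prod_{k=1}^n\lambda_{i_k}\Big)\,P_{T^n}\big(1\,\big|\,R^{(n)}_{\vec{\imath}}\big)\le\varepsilon_n,
\end{equation}
where the inequality applies \eqref{eq:local_bound} to each null-satisfying $R^{(n)}_{\vec{\imath}}$ and uses that the weights sum to $1$. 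As $\varepsilon_n\to0$, this contradicts \eqref{non_local}, proving that no such test family exists. I expect the one genuinely delicate point to be justifying this averaging rigorously in the adaptive setting, i.e.\ verifying that commuting the global mixture past the test's history-dependent input choices $Q_k(x_k|s_{k-1})$ and final rule $Q_f(o|s_n)$ is legitimate, which follows from the global hidden variable being sampled independently of all inputs.
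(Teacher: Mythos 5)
Your argument is correct and is essentially identical to the paper's own proof: both decompose $P^{\otimes n}$ as the mixture $\sum_{\vec{\imath}}\bigl(\prod_k\lambda_{i_k}\bigr)\bigotimes_k Q_{i_k}$ of product behaviors each satisfying the null hypothesis, then use linearity of $P_{T^n}$ in $P^{(n)}$ together with Eq.~\eqref{eq:local_bound} to bound the acceptance probability by $\varepsilon_n$, contradicting Eq.~\eqref{non_local}. Your added remarks on avoiding the round-correlated mixture $\sum_i\lambda_iQ_i^{\otimes n}$ and on commuting the mixture past the adaptive inputs are sound elaborations of what the paper leaves implicit.
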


\begin{proof}
    By contradiction. Let $(T^n)_n$ be a family of hypothesis tests satisfying Eqs.~\eqref{non_local} and (\ref{eq:local_bound}). Since $P \in \operatorname{conv}(\mathcal{S})$, we consider any decomposition of $P= \sum_i \lambda_i P_i $ as a convex combination of distributions $\{P_i\}_i\subset\S$. Then we have that
    \begin{align}
    P_{T^n}(1 | P^{\otimes n})=& \sum_{i_1, \ldots i_n} \lambda_{i_1} \cdots \lambda_{i_n} P(1| P_{i_1} \otimes \cdots \otimes P_{i_n}) \nonumber\\
    \leq & \ \varepsilon_n \sum_{i_1, \ldots i_n} \lambda_{i_1} \cdots \lambda_{i_n} = \varepsilon_n ,
    \end{align}
where we used first the linearity of $P_{T^n}$ with respect to $P^{(n)}$, see Eq.~\eqref{eq:PT}, and then Eq.~\eqref{eq:local_bound}. It follows that, $\lim_{n\to\infty}P_{T^n}(1|P^{\otimes n})=0$, contradicting Eq.~\eqref{non_local}.
\end{proof}

Proposition \ref{theo:main} implies that the membership tests for non-convex sets currently used in experiments, e.g., non-convex probability or correlator inequalities for the analysis of quantum networks~\cite{bilocality_experimental, bilocality_experimental2, bilocality_experimental3, n-locality, Polino}, are no longer sound when the iid\ assumption is not satisfied. In fact, for any set of correlations $\S$, it is only possible to certify that the device is capable of producing correlations outside $\mbox{conv}(\S)$. In any  communication scenario, the convexification of set $\S$ can be interpreted as having all the communicating parties share an arbitrary source of classical randomness. By this reasoning, causal networks without inputs, such as the triangle scenario \cite{Fritz}, cannot exhibit any quantum advantage in the non-iid\ regime. Indeed, in any such network, a source of shared randomness allows all possible correlations to be achieved with classical resources.

Another consequence of Proposition~\ref{theo:main} is that, in general, it is impossible to certify that a source of quantum states is distributing separable states. Indeed, let ${\cal E}$ denote the set of entangled states and consider the separable two-qubit state $\rho=\frac{1}{2}(\proj{\psi^+}+\proj{\psi^-})$, with $\ket{\psi^{\pm}}:=\frac{1}{\sqrt{2}}(\ket{0}_A\ket{1}_B\pm \ket{1}_A\ket{0}_B)$. Since $\rho\in \mbox{conv}({\cal E})$, by Proposition \ref{theo:main} it is impossible to devise a test that outputs $0$ with high probability when the source is generating highly entangled states and $1$ when the source is distributing $n$ independent copies of $\rho$.

\section{Self testing without iid}\label{sec:self_test}

Let $\S$ be a set of physical correlations, and let $P\in\S$. Without relying on the iid assumption, under which circumstances can we guarantee that our device has prepared a distribution $Q$ such that $\|P-Q\|<\delta$? More formally, we consider the following problem.

\begin{problem}
\label{problem_st}
Given $\delta'> 0$, a closed set ${\cal P}$ of correlations and $P\in {\cal P}$, we aim to devise a family of hypothesis tests $(T^n)_n$ such that, for some $\delta\in(0,\delta')$,
\begin{equation}
\lim_{n\to\infty}P_{T^n}(1|Q^{\otimes n})=1,
\label{st_eII}
\end{equation}
for all $Q\in {\cal P}$ with $\|P-Q\|<\delta$. Moreover, there exists a sequence $(\epsilon_n)_n$ of real, non-negative numbers satisfying
\begin{equation}
\lim_{n\to\infty} \epsilon_n=0,
\label{limit_epsi}
\end{equation}
such that
\begin{align}
&P_{T^n}(1|P^{(n)})\leq \epsilon_n,\nonumber\\
\label{st_eI}
\end{align}
holds for all $P^{(n)}$ with marginals $\{P_k(a|x,s_k):k,s_k\}\subset {\cal P}$ satisfying $\|P_k(a|x,s_k)-P\|>\delta'$, $\forall k,s_k$.
\end{problem}
Not surprisingly, the answer of this problem depends on the relation between $P$ and $\mbox{conv}({\cal P})$.

\begin{prop}
\label{prop:st}
Problem \ref{problem_st} can be solved for arbitrarily small $\delta'>0$ iff $P$ is an extreme point of $\mbox{conv}({\cal P})$.
\end{prop}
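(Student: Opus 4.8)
The plan is to prove both implications by reducing everything to the linear case already handled in the construction of~\eqref{holy_grail}, using as a bridge the elementary characterization that $P$ is an extreme point of $\mbox{conv}(\mathcal{P})$ if and only if $P\notin\mbox{conv}(\mathcal{P}_{\delta'})$ for every $\delta'>0$, where $\mathcal{P}_{\delta'}:=\{Q\in\mathcal{P}:\|Q-P\|\ge\delta'\}$ is the set of ``far'' correlations; equivalently, $P$ fails to be extreme precisely when it admits a finite decomposition $P=\sum_i\lambda_i Q_i$ with $Q_i\in\mathcal{P}$, $\lambda_i>0$ and $\delta_0:=\min_i\|Q_i-P\|>0$. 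I would first establish this bridging fact via Carathéodory's theorem: any nontrivial convex decomposition of $P$ in $\mbox{conv}(\mathcal{P})$ can be rewritten, by collecting the components equal to $P$ and renormalising, as a combination of finitely many points of $\mathcal{P}$ all distinct from $P$, and $\delta_0$ is then just their minimal distance to $P$. Since correlations have entries in $[0,1]$ and $\mathcal{P}$ is closed, each $\mathcal{P}_{\delta'}$ and its convex hull are compact, which is what makes the separation below available.

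For the ``only if'' direction I would argue by contradiction, mirroring Proposition~\ref{theo:main}. Suppose $P$ is not extreme and fix any $\delta'<\delta_0$, so that every $Q_i$ above is far in the sense of~\eqref{st_eI}. The key point is that the honest iid behaviour factorises as $P^{\otimes n}=\bigl(\sum_i\lambda_i Q_i\bigr)^{\otimes n}=\sum_{\vec{\imath}}\lambda_{i_1}\cdots\lambda_{i_n}\,Q_{i_1}\otimes\cdots\otimes Q_{i_n}$, a convex combination of the product behaviours $Q_{\vec{\imath}}:=\bigotimes_k Q_{i_k}$. Each $Q_{\vec{\imath}}$ has round-$k$ conditional equal to $Q_{i_k}$ independently of the history, so all its marginals lie in $\mathcal{P}$ at distance $\ge\delta_0>\delta'$ from $P$; hence $Q_{\vec{\imath}}$ is admissible for~\eqref{st_eI} and $P_{T^n}(1|Q_{\vec{\imath}})\le\epsilon_n$. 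Using the linearity of $P_{T^n}(1|\cdot)$ in the behaviour (Eq.~\eqref{eq:PT}), I would conclude $P_{T^n}(1|P^{\otimes n})=\sum_{\vec{\imath}}\lambda_{i_1}\cdots\lambda_{i_n}P_{T^n}(1|Q_{\vec{\imath}})\le\epsilon_n\to0$, contradicting~\eqref{st_eII} applied to $Q=P$ (note $\|P-P\|=0<\delta$). Thus no test family exists for any $\delta'<\delta_0$.

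For the ``if'' direction I would convert the extremality of $P$ into a single linear witness and invoke the linear martingale test underlying~\eqref{holy_grail}. Fix any $\delta'>0$. By the bridging fact $P\notin\mbox{conv}(\mathcal{P}_{\delta'})$, and since the latter is compact and convex, strict separation yields a linear functional $L$ and a threshold $c$ with $L(P)>c\ge L(Q)$ for all $Q\in\mathcal{P}$ with $\|Q-P\|\ge\delta'$. By continuity of $L$ I can choose $\delta\in(0,\delta')$ small enough that $L(Q)>c$ whenever $\|Q-P\|<\delta$. Setting $F=L$, $\alpha=c$, the linear-functional construction of~\eqref{holy_grail} from~\cite{ZhangPRA2013} provides a family $(T^n)_n$ that detects every iid behaviour with $L(Q)>c$ (covering all close $Q^{\otimes n}$, giving~\eqref{st_eII}) while obeying $P_{T^n}(1|P^{(n)})\le\epsilon_n\to0$ for every behaviour whose conditionals satisfy $L(P_k(\cdot|\cdot,s_k))\le c$; since each far conditional lies in $\mathcal{P}$ with $\|P_k-P\|>\delta'$, it satisfies $L(P_k)\le c$, yielding~\eqref{st_eI} and~\eqref{limit_epsi}. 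As this works for arbitrary $\delta'>0$, the problem is solvable for arbitrarily small $\delta'$.

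The step I expect to require the most care is the sufficiency's reliance on a \emph{global} linear separation: extreme points of a convex body need not be exposed, so there may be no linear functional attaining a strict maximum over $\mbox{conv}(\mathcal{P})$ at $P$, and the earlier discussion only promises linear self-tests for exposed points. The resolution -- and the reason the statement is phrased for arbitrarily small $\delta'$ rather than at a fixed scale -- is that I never need to separate $P$ from all of $\mbox{conv}(\mathcal{P})$, only from the far set $\mathcal{P}_{\delta'}$, and extremality guarantees $P\notin\mbox{conv}(\mathcal{P}_{\delta'})$ at \emph{every} scale; this weaker, scale-dependent separation always exists and is exactly what the linear test consumes. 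A secondary subtlety worth stating explicitly on the necessity side is that the ``far from $P$'' constraint is imposed on the conditionals $P_k(\cdot|\cdot,s_{k-1})$, so a deterministic product of far points is itself far-admissible even though its mixture $P^{\otimes n}$ has round-one marginal exactly $P$; it is the decomposition of $P^{\otimes n}$ into such far behaviours, rather than any attempt to engineer a single stationary far attack, that drives the impossibility.
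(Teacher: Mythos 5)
Your proposal is correct and follows essentially the same route as the paper: Carath\'eodory plus the product-decomposition/linearity argument (which is exactly Proposition~\ref{theo:main} specialized to $\S=\{Q_i\}_i$) for necessity, and Hahn--Banach separation of $P$ from the convex hull of the far set followed by the linear test of \cite{ZhangPRA2013} for sufficiency. The only cosmetic differences are that you inline the proof of Proposition~\ref{theo:main} rather than invoking it, and that you separate $P$ from $\mbox{conv}(\{Q\in{\cal P}:\|Q-P\|\geq\delta'\})$ whereas the paper uses the far part of $\mbox{conv}({\cal P})$; both suffice since the soundness condition only constrains marginals lying in ${\cal P}$.
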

\begin{proof}
First, we prove that the problem cannot be solved for all $\delta' >0$ if $P$ is not an extreme point of $\mbox{conv}({\cal P})$. Suppose, instead, that such is the case. By Caratheodory's theorem, there exist $\{\lambda_i\}_{i=1}^m\subset \R^+$, and $\{P_i\}_{i=1}^m$, with $m\leq \|P\|_0+1$ such that
\begin{equation}
P=\sum_{i=1}^m \lambda_iP_i,
\end{equation}
where $\|P\|_0$ denotes the number of non-zero probabilities of the behavior $P(a|x)$.
Now, suppose that there exists a family of binary hypothesis tests $(T^n)_n$ satisfying Eqs.~\eqref{st_eII}, (\ref{limit_epsi}) and (\ref{st_eI}) for $\delta'<\min_i\|P_i-P\|$ and arbitrary positive $\delta<\delta'$. Those tests comply with Eqs.~\eqref{non_local}, (\ref{eq:local_bound}) for $\S:=\{P_i\}_i$. Since $P\in\mbox{conv}(\S)\setminus \S$, their existence contradicts the statement of Proposition \ref{theo:main}.

Assume, on the contrary, that $P$ is an extreme point of $\mbox{conv}({\cal P})$, and, for arbitrarily small $\delta'>0$, define the closed set ${\cal A}:=\{Q\in \mbox{conv}({\cal P} ):\|Q-P\|\geq \delta'\}$. Since $P\not\in {\cal A}\subset \mbox{conv}({\cal P})$ and $P$ is extremal in $\mbox{conv}({\cal P})$, it follows that $P$ does not belong to the closed set $\mbox{conv}({\cal A})$. Therefore, by the Hahn-Banach theorem, there exist $\alpha\in\R$ and a separating linear functional $c$, which we can take to be normalized, such that $c(Q)\leq \alpha$, for all $Q\in \mbox{conv}({\cal A})$ and $c(P)>\alpha$. Setting $0<\delta< c(P)-\alpha$, we find that, for any $Q\in{\cal P}$ with $\|Q-P\|<\delta$, $c(Q)-\alpha\geq c(P)-\delta-\alpha>0$. To summarize, the linear functional $c$ satisfies $c(Q)\leq \alpha$, for all $Q\in{\cal P}$ with $\|Q-P\| \geq \delta'$ and $c(Q)>\alpha$, for all $Q\in{\cal P}$ with $\|Q-P\|<\delta$. Using the construction by \cite{ZhangPRA2013}, we can thus find a family of tests $(T^n)_n$ satisfying Eq.~\eqref{holy_grail} and thus the conditions of the proposition.
\end{proof}

A corollary of this proposition is that, in the non-iid regime, one cannot devise an experiment to self-test any physical operation $\tilde{O}$ that is not extreme in the set of operations allowed by the experimental setup, unless two or more uses of $\tilde{O}$ are allowed at each experimental round. 

Suppose, indeed, that we wished to certify that a given operation $O$ of our device is $\tilde{O}$, and let us assume that only one use of $O$ is allowed within an experimental round. By Bayes rule, this implies that, all other operations being fixed, the single-shot distribution $P(a|x,O)$ depends in a linear affine way on $O$.

Now, suppose that $\tilde{O}=\sum_i \lambda_i O_i$, where $\{O_i\}_i$ are extreme operations and the weights $\{\lambda_i\}_i\subset \R^+$ add up to $1$. Define $P_i:=P(a|x,O_i)$. Then, there is the possibility that $P_i=P$, for some $i$. In that case, one could not distinguish the non-extreme operation $\tilde{O}$ from the extreme $O_i$ with the considered setup. That is, it would be impossible to self-test $\tilde{O}$. Alternatively, it could be that $P_i\not=P$, for all $i$, in which case $P=\sum_i \lambda_iP_i$ would not be extreme in the set of allowed behaviors. Thus, by the above proposition, one cannot certify that the device generated a distribution arbitrarily close to $P$ in any experimental round, and so one cannot self-test $\tilde{O}$.

The impossibility of self-testing mixed states and non-extremal POVMs in Bell scenarios was already known \cite{baptista2023mathematicalfoundationselftestinglifting}. Self-testing has, however, been studied in communication scenarios other than Bell's, such as causal networks, where protocols to self-test arbitrary quantum states and measurements exist \cite{sarkar2023universal}. In a similar vein, self-testing results for mixed states and non-extremal measurements can be found for prepare-and-measure scenarios \cite{Nikolai}, also under the iid assumption. 

By our previous observation, though, any such protocol is vulnerable to memory attacks. A more positive consequence of Proposition \ref{prop:st} is that the work of \cite{Miguel} completely characterizes self-testing in the prepare-and-measure scenario (in the non-iid regime). There the authors provide a general protocol to self-test any ensemble of extreme (pure) quantum preparations and extreme POVMs. By Proposition \ref{prop:st}, no other states and measurements can be self-tested.

What happens if we allow more than one use of $O$ within a single experimental round?  That would be a form of iid \emph{within} rounds, rather than \emph{between} rounds. It might be a good approximation in many experimental setups, though, if all the calls to $O$ take place in a very short time span. 

If we allow, say, up to two uses of $O$ per round, then each of the probabilities of the single-shot behavior $P(a|x,O)$ will be a second-degree polynomial on the parameters defining $O$. Equivalently, it will be an affine linear expression on $O$ and $O\otimes O$. Hence, the relevant set to consider is not the set ${\cal O}$ of all physical operations, but the set ${\cal O}^2:=\mbox{conv}\left(\{(O,O\otimes O):O\in{\cal O}\}\right)$. While $\tilde{O}\in{\cal O}$ might not be an extreme point of ${\cal O}$, it is possible that $(\tilde{O},\tilde{O}^{\otimes 2})$ is an extreme point of ${\cal O}^2$, in which case the no-go argument above does not apply.

Suppose, for instance, that ${\cal O}$ is the set ${\cal Q}_D$ of $D$-dimensional quantum states. Then, ${\cal O}^2$ would correspond to the set ${\cal Q}_D^2$ of pairs of matrices that can be expressed as convex combinations of pairs of the form $(\rho, \rho^{\otimes 2})$, where $\rho$ is a quantum state. It turns out that, for \emph{any} $\rho\in {\cal Q}_D$, $(\rho,\rho^{\otimes 2})$ is an exposed, extreme point of ${\cal Q}^2_D$. Indeed, consider the following $D^2\times D^2$ matrix:
\begin{equation}
W_\rho=V+\tr(\rho^2)\id_D^{\otimes 2}-2\id_D\otimes \rho,
\end{equation}
where $V\ket{j}\ket{k}=\ket{k}\ket{j}$ denotes the permutation operator. This matrix gives rise to the linear functional ${\cal W}_\rho:{\cal Q}_D^2\to\R$ through the relation ${\cal W}_\rho((\alpha,\beta)):=\tr\{W_\rho \beta\}$, for $(\alpha,\beta)\in{\cal Q}_D^2$. It can be verified that, for any $\sigma\in{\cal Q}_D$, ${\cal W}_\rho((\sigma,\sigma^{\otimes 2}))=\tr\{(\sigma-\rho)^2\}$. Thus, the linear optimization problem 
\begin{align}
&\min{\cal W}_\rho(q)\nonumber\\
\mbox{such that }&q\in{\cal Q}_D^2
\end{align}
has $q=(\rho,\rho^{\otimes 2})$ as a unique solution, and so $(\rho,\rho^{\otimes 2})$ is an exposed point of ${\cal Q}_D^2$. This argument can be extended to any set ${\cal O}$ that can be embedded in a countable vector space and such that there exists a linear functional $\omega:{\cal O}\to\R$ with $\omega(O)=1$, for all $O\in {\cal O}$. This includes the set of all $D$-dimensional POVMs/quantum channels/quantum instruments. 

 Hence, Proposition \ref{prop:st} does not preclude the possibility of self-testing, e.g., all $D$-dimensional POVMs when more than a single measurement is allowed within a given round. These ideas have been investigated by \cite{Das_2022,classicalcertificationquantumgates}. Indeed, the sequential measurement protocol proposed in \cite{Das_2022} does precisely that: it self-tests arbitrary (i.e., non-necessarily extreme) $D$-dimensional POVMs through a witness linear in the input/output distribution $P(a_1,a_2|x_1,x_2)$.

\section{Prior communication in causal network experiments}
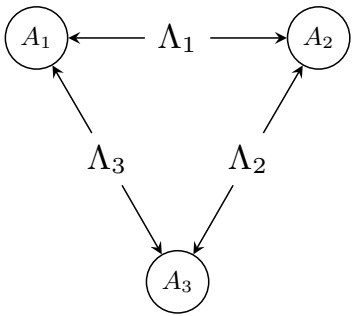
\begin{figure}[t]
	\centering
	\resizebox{0.65\columnwidth}{!}{%
		\begin{tikzpicture}[scale=0.65]    \node[draw=black,circle,scale=0.75]  (X) at (-2,2) {$A_1$};
        \node[draw=black,circle,scale=0.75]  (Y) at (2,2) {$A_2$};
	    \node[draw=black,circle,scale=0.75]  (Z) at (0,-1.46) {$A_3$};
		\node (A) at (1,0.28) {$\Lambda_2$};
			\node (B) at (-1,0.28) {$\Lambda_3$};
			\node (C) at (0,2) {$\Lambda_1$};
			\draw [->,>=stealth] (A)--(Y) node [right,pos=0.4] {};
			\draw [->,>=stealth] (A)--(Z) node [right,pos=0.5] {};
			\draw [->,>=stealth] (B)--(X) node [left,pos=0.4] {};
			\draw [->,>=stealth] (B)--(Z) node [left,pos=0.5] {};
			\draw [->,>=stealth] (C)--(X) node [above,pos=0.5] {};
			\draw [->,>=stealth] (C)--(Y) node [above,pos=0.5] {};
		\end{tikzpicture}
	}%
	\caption{Triangle network. Three parties, Alice, Bob and Charlie, make observations captured by random variables $A_1,A_2,A_3$ respectively. In the example in the main text these are binary, i.e., take values in $\{0,1\}$. For this, each pair of parties shares a bipartite source of shared randomness, in terms of a random variable $\Lambda_1, \Lambda_2, \Lambda_3$ respectively (see diagram). In the quantum case these are replaced by quantum states, which increases the number of distributions $P(a_1,a_2,a_3)$ that can be generated.}
	\label{fig:triangle}
\end{figure}

Consider the triangle causal network~\cite{Steudel, Fritz}, depicted in Figure~\ref{fig:triangle}. In the triangle scenario three separate agents locally generate random variables by combining three bipartite sources. Thus, a tri-variate distribution $P(a_1,a_2,a_3)$ is compatible with the triangle network if there exist distributions $p_1,p_2,p_3, q_1, q_2,q_3$ such that
\begin{equation}\label{eq:P_triangle}
\begin{split}
P(a_1,a_2,a_3)=\sum_{\lambda_1,\lambda_2,\lambda_3} & p_1(\lambda_1)p_2(\lambda_2)p_3(\lambda_3) q_1(a_1|\lambda_1,\lambda_3)\\
&q_2(a_2|\lambda_1,\lambda_2)q_3(a_3|\lambda_2,\lambda_3).
\end{split}
\end{equation}
Call $\S$ the set of all single-site distributions satisfying the decomposition above. Remarkably, some tri-variate distributions cannot be arbitrarily well approximated by any distribution in $\S$. One of these is the  distribution $P_c(0,0,0)=P_c(1,1,1)= \frac{1}{2}$, which violates a number of non-linear inequalities satisfied by all triangle-feasible distributions~\cite{Steudel,Fritz,NonShannon,Inflation, Chaves_2014}. 

The reader might wonder what would happen if we ran a family of tests $(T^n)_n$, tailor-made to detect $P_c\in \S$ under the iid assumption, but otherwise allowed the three triangle-constrained players to use non-iid strategies. By Proposition \ref{theo:main}, it follows that $(T^n)_n$ cannot satisfy Eq. (\ref{eq:local_bound}). Thus, there must exist non-iid behaviors $P^{(n)}\in \S^{\otimes n}$ that allow the parties to pass $T^n$ with non-negligible probability in the limit $n\to\infty$. But, what do these behaviors look like?

As it turns out, for any such family of tests there exist many feasible behaviors that asymptotically pass the test with probability tending to $1$. Moreover, some of them do not depend on the specific single-shot witness function $F$ employed to decide the outcome, see Eq.~\eqref{eq:prob_ineq}. One of these behaviors is the `clock strategy' $P^{(n)}_{clock}:=(P_0\otimes P_1)^{\otimes n/2}$, with $P_b(a_1,a_2,a_3):=\delta_{b,a_1}\delta_{b,a_2}\delta_{b,a_3}$. For $n=2m\geq 2$, the observed average frequencies $\tilde{P}(a_1,a_2,a_3)$, see \eqref{eq:prob_est_iid}, would satisfy $\tilde{P}=P_c$ exactly and so the three parties would pass the hypothesis test.

The clock strategy requires each player $i$ to keep track of an internal variable $a_i\in\{0,1\}$, which tells them what digit to output and gets updated at each round. In this case, the update rule is very simple: $a_i\to a_i\oplus 1$. The scheme demands the three players to have the same values of the variables ($a_1,a_2,a_3)$ at round $1$: should player $1$ not be synchronized with the other two, the frequency average would change to $\tilde{P}(1,0,0)=\tilde{P}(0,1,1)=\frac{1}{2}$. This single-shot distribution is also unrealizable within the triangle network, but might not maximize functional $F$.

The reader might not think remarkable that three independent binary counters coincide. In fact, it is easy to find natural mechanisms to ensure this, e.g.: the memory system storing $a_i$ might decay to state $a_i=0$ in the absence of frequent updates. That all three players also independently come up with the same local update rule is less clear, but may still seem plausible. Consider, though, the following strategy $R^{(n)}$: let $q$ be a $n$-bit random sequence, sampled from a uniform distribution. Then, at round $k$ each party just needs to output $q_k$ to generate $\tilde{P}\approx P_c$ with high probability. At the same time, with high probability no $n$-round test can distinguish this \emph{deterministic} strategy from $P_c^{\otimes n}$. 

The behavior $R^{(n)}$ is deterministic and thus compatible with the triangle scenario. However, it does not seem \emph{plausible} 
that three separate parties happen to come up with the same sequence $q$ of numbers out of the blue, even though it is in principle possible. Coordinating strategies of this sort requires communication between the agents \emph{before} the test starts.

To our knowledge, the role of prior communication in causal network experiments has not been discussed, except for comments on the role of \emph{shared reference frames} in specific examples~\cite{reference_frames, with_inputs_1}. This is most likely because this discussion is irrelevant both in the iid\ regime and in any causal network with a common source of randomness, such as a Bell test or the recent experimental disproofs of real quantum mechanics \cite{Renou_2021, RQ_exp1, RQ_exp2, RQ_exp3}. 
In non-iid\ scenarios without common classical sources, though, prior communication makes a difference. Depending on the goal of the subsequent test, it might be completely forbidden, but also partially or fully allowed, as long as it is accounted for. We next examine these three possibilities. In each case, we will take for granted that all the parties are informed of the test they are about to take. This premise is characteristic of cryptographic scenarios, where the protocol to be implemented is assumed to `leak' to the public.

Unlimited prior communication, while harmless in the iid\ case, would in general allow separate parties to run strategies using unlimited amounts of shared randomness, such as $R^{(n)}$. In such a scenario, the set of feasible behaviors is indistinguishable from those generated by a network with a common randomness source.

Other scenarios might allow a limited amount of prior communication between the parties, before or after the test is leaked. In this regard, a minimal paradigm would allow some known, prior, bounded, test-independent knowledge common to all the parties involved in the communication game, e.g.\ all the parties speak English. 

This paradigm would preclude the implementation of strategy $R^{(n)}$, for $n\gg 1$. It is, however, sufficient to run $P^{(n)}_{clock}$; in fact, it allows the players to pass any hypothesis test with maximum probability. In effect, all they need to do is follow this meta-strategy: once they receive the description of the game, each party (locally) computes all the optimal deterministic strategies to win it. Among those, they follow the first optimal strategy in lexicographic order (here we need to demand that the three players know the meta-strategy, and hence the notion of lexicographic order). This procedure guarantees that, no matter the test, the three parties will implement an optimal strategy, not only among all deterministic ones, but in general. This is so due to the following trivial identity:
\begin{equation}
\max_{P^{(n)}\in \mathbb{P}} P_T(1|P^{(n)})=\max_{P^{(n)}\in \mathbb{P}_{det}} P_T(1|P^{(n)}),
\end{equation}
where $\mathbb{P}$ ($\mathbb{P}_{det}$) denotes the set of all (deterministic) $n$-round behaviors. This meta-strategy can be straightforwardly generalized to any classical causal network without inputs. 

Finally, a third experimental scenario would ban any form of communication between the parties prior to  the test, other than the knowledge of the test itself. This prescription would rule out all multipartite quantum experiments conducted so far on causal networks without a common source of randomness, even under the iid assumption. Indeed, those experiments, which require all the parties to conduct appropriate quantum operations, are impossible to set up without contacting all the parties involved. Furthermore,  experiments that involve independent sources generally also require a synchronisation of the sources in different rounds, which according to this viewpoint would also be considered part of the (forbidden) communication.

\section{Conclusions}

In this paper, we have proven that basic primitives of quantum information theory cannot be implemented when one drops the iid assumption. More specifically, we showed that, in the non-iid regime, one cannot devise: (1) binary hypothesis tests to decide membership within a non-convex set and (2) self-tests of non-extreme operations (as long as we only allow the use of one operation per experimental round).

As a consequence of (1), currently used membership criteria for nonconvex sets, typically associated with causal network inequalities, do not satisfy the requirements of a general hypothesis test for the non-iid scenario. Moreover, the non-classicality of causal networks without inputs, such as the triangle network, cannot be certified. A consequence of (2) is that existing self-testing protocols for mixed density matrices and quantum measurements \cite{Nikolai, sarkar2023universal} cannot resist memory attacks. On a more positive note, (2) also implies that the recent result by one of us \cite{Miguel} completely characterizes self-testing in prepare-and-measure scenarios. 

While analyzing the form of optimal $n$-round strategies in network communication tests, we raised the need to account for prior communication between the parts of the experiment involved. As we saw, given unrestrained communication, the statistics generated by the parties are indistinguishable from those corresponding to a causal network with a common classical source node. In fact, just knowing the test to be conducted is enough for the players to `simulate' the presence of such an extra node. On the other hand, banning prior communication altogether would not allow setting up most feasible iid strategies. Ultimately, deciding what is or not allowed is determined by the practical use we want to make of the network.

\begin{acknowledgements}

\begin{wrapfigure}{r}[0cm]{2cm}
\begin{center}
\includegraphics[width=2cm]{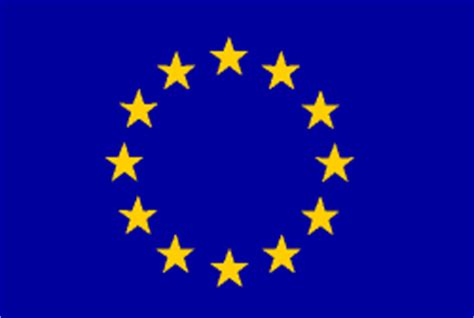}    
\end{center}
\end{wrapfigure} 

This project was supported by the Swiss National Science Foundation (Ambizione PZ00P2$\_$208779); by the QuantERA II Programme that has received funding from the European Union's Horizon 2020 research and innovation programme under Grant Agreement No 101017733 and by the Austrian Science Fund (FWF) through project I-6004, project ZK 3, and project F 7113.

\vspace{.5cm}
    
\end{acknowledgements}

\bibliography{iid}

\end{document}